\theoremstyle{plain}
\newtheorem{theorem}{Theorem}
\theoremstyle{definition}
\begin{document}


\date{}

\title{\Large\bf The Consistency of Quantum Mechanics\\
Implies Its Non-Determinism}

\author{\bf Iegor Reznikoff\\[1ex]
 Professor Emeritus, Departement de Philosophie,\\
Universit\'e de Paris-Ouest, 92001 Nanterre, France\\
\textit{E-mail}: dominiqueleconte@yahoo.fr}



\maketitle




Recently [1], Conway and Kochen have shown that, in Quantum Mechanics(QM),
there is a mathematical correlation between the \textit{freedom} of the observer to observe (or not) a particle and the indetermination of some property of the particle in the theory; this is essentially the meaning 
of their Free-Will Theorem (FWT). The introduction of the notion of observer's freedom has shed a new light on the subject. Conway and Kochen used four axioms (three from QM and one of geometrical nature), and their proof involves some subtle, mostly physical, arguments. In [2] we have given a purely logical proof of the FWT, based on only two of their axioms (one from QM and the geometrical one, but expressed in formal logic). Here we show that the hypothesis of observer's freedom is not necessary for the conclusion, provided the consistency
(non-contradiction) of~QM is assumed. This paper is self-contained, the main ideas, however, come from [2].

\section*{\large Introduction: the Logical Setting}

Let $F$ be the physico-mathematical theory in which QM are expressed; we consider $F$ as a deductive theory in classical logic with $\vdash$ as the corresponding deductive symbol. $A(t, i)$ is a predicate symbol for `at time~$t$, $A$ observes a particle and verifies (measures) some property (value) depending on the parameter (or direction) $i$\,'; and $P(t, i)$ is 
for `at time~$t$ the particle has the observed property on the same parameter~$i$\,'.
We suppose that $F$ contains the following two axioms:
\medskip

$\text{Ob}$\hfill$\forall t\ \forall i\ \ (A(t,i)\to P(t,i))$\hfill $\hphantom{A}$
\medskip

$\text{Im}$\hfill$\forall t\ \urcorner\,\forall i\ \ P(t,i)$,
\qquad where $i = 1, 2,\dots,n$.\hfill$\hphantom{A}$
\medskip

\noindent
Ob (for \textit{observational}) and~Im (for \textit{impossibility}) are, respectively, Axioms~2 and~3 in~[2], where their meaning in QM is explained. Briefly, Ob is given by the experimental (observational) evidence in QM (e.g. by observation and measurement of the spin values in different directions $i$ in an EPR-like experiment or by separate measurements of position ($i$ = 1) or momentum ($i$ = 2) of a particle), while Im is given by mathematical elementary probability theory (Bell's theorem~[3]) or geometrical evidence (Kochen--Specker theorem, as it is used in the proof of Conway and Kochen) or given by physical evidence (Heisenberg uncertainty principle: it is impossible to observe position and momentum at the same time).

The notion of \textit{freedom} of the observer, relative to the deduction in theory~$F$, is defined naturally (as in [2]) by
\medskip

$\hphantom{A}$\hfill$\forall t\ \forall i\ \ (F\nvdash A(t,i)\wedge  F\nvdash\,\urcorner\, A(t,i))$.\hfill$\hphantom{A}$
\medskip

\noindent
while the indetermination or \textit{freedom} (following Conway and Kochen) of the particle relative to the observed property in the theory~$F$ is defined by
\medskip

$\hphantom{A}$\hfill$\forall t\ \exists i\ \ (F\nvdash P(t,i)\wedge  F\nvdash\,\urcorner\, P(t,i))$.\hfill$\hphantom{A}$
\medskip

\noindent
Following Conway and Kochen, we say briefly that the (behaviour of the) \textit{particle is free } in~$F$, for the more correct statement that the observed property of the particle is undetermined in~$F$.

In [2] we have proved the main non-deterministic theorem:

\begin{theorem}
\label{th1}
If a consistent theory~$F$ verifies axioms \textup{Ob} and~\textup{Im},
then the freedom of the observer implies the freedom of the particle relative to theory~$F$.
\end{theorem}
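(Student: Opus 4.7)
My plan is to fix an arbitrary time $t$ and exhibit a single index $i$ that witnesses freedom of the particle at $t$, by handling the two non-derivability requirements separately. The first uses only axiom Ob and observer freedom; the second uses only Im and consistency of $F$. No further machinery appears to be required.

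For the first half, I would use axiom Ob to prove $F \nvdash \neg P(t,i)$ for \emph{every} $i$. The contrapositive of Ob is derivable in classical logic as $F \vdash \neg P(t,i) \to \neg A(t,i)$. If $F$ proved $\neg P(t,i)$ for some $i$, modus ponens would give $F \vdash \neg A(t,i)$, contradicting observer freedom at that pair $(t,i)$. So the ``negative side'' of particle freedom holds uniformly in $i$. For the second half, I would use Im together with consistency to produce at least one $i$ with $F \nvdash P(t,i)$. Suppose, for contradiction, that $F \vdash P(t,i)$ holds for every $i \in \{1,\ldots,n\}$. Because this range is finite and explicit, the individual provabilities can be conjoined to give $F \vdash P(t,1) \wedge \cdots \wedge P(t,n)$, whence $F \vdash \forall i\, P(t,i)$. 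But Im asserts $F \vdash \neg \forall i\, P(t,i)$, contradicting consistency. So some $i_0$ satisfies $F \nvdash P(t,i_0)$, and combined with the first half this $i_0$ witnesses freedom of the particle at $t$.

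The main delicacy, rather than a true obstacle, is the passage from ``$F \vdash P(t,i)$ for every concrete $i$'' to ``$F \vdash \forall i\, P(t,i)$''. This step crucially uses that $i$ ranges over the explicit finite set $\{1,\ldots,n\}$ built into the statement of Im; for an infinite index set, $F$ could in principle prove each instance without proving the universal closure, as with Gödelian true-but-unprovable $\Pi_1$ statements. Once this finiteness is acknowledged, the argument is a clean interplay between the two axioms and the consistency of $F$, with observer freedom used only to block $\neg P(t,i)$ via Ob and consistency used only to block a full conjunction of $P(t,i)$ via Im.
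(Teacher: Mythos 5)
Your proof is correct and follows essentially the route the paper intends: the paper itself defers the proof of Theorem~\ref{th1} to [2], but your two steps (Im plus consistency rules out $F\vdash P(t,i)$ for all $i$; Ob plus observer freedom rules out $F\vdash\urcorner P(t,i)$) are exactly the ingredients the paper reuses in its proof of Theorem~\ref{th2}, merely organized directly rather than by assuming non-freedom and deriving a contradiction. Your remark on the finiteness of the range of $i$ is a fair point of care and is consistent with the paper's stipulation $i=1,2,\dots,n$.
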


This applies to the theory~$F$ of~QM, where the freedom of the observer (resp. particle) is verified if $F = F(t')$ contains only events that happen at time $t' < t$  (where $t$ is the time of observation): the behaviour of the observer is not predetermined by the theory  (and, consequently, says the theorem, nor is the behaviour of the particle for some parameter). For 
$n = 2$, in the case of the Heisenberg uncertainty principle, the theorem shows that the freedom or non-determinism of the behaviour of particles could have been established on logical grounds since the very beginning of~QM.

The proof of this theorem is not needed in what follows.
\bigskip

\section*{\large Consistency and Non-Determinism}

It is important to note, first, that the physical theory~$F$ containing QM, as a whole, contains the mathematics needed for real (and complex) analysis necessary to express physics and contains no more axioms on infinity than those of Analysis. Consequently, the consistency of~$F$, as a whole, supposes the consistency of the mathematics used. While, concerning non-determinism, the incompleteness of mathematics (e.g. real analysis) is well known, we are concerned here only with the non-determinism of physical events. We write $\operatorname{Cons}$ = $\operatorname{Cons}$ $F$ for the formula expressing the consistency of~$F$ as is usual in Mathematical Logic (formally, it means that $F\nvdash  0 = 1$). The purpose here is to eliminate the hypothesis of observer's freedom; for this Ob is modified in the following way:
\medskip

$\text{COb}$\hfill$\forall t\ \forall i\ \ ((\operatorname{Cons}\to A(t,i))\to P(t,i))$.\hfill $\hphantom{A}$
\medskip

The axiom COb is obviously stronger than Ob; its meaning has to be explained. If $F$ is consistent, the two axioms are equivalent. Literally, in COb the observation is conditional: if, whenever $F$ is consistent, $A$ observes the particle,
then the particle verifies property $P(t, i)$. But $\operatorname{Cons}$ $F$ does not depend on time; thus, believing in the consistency of~$F$, informally, COb has the same meaning as Ob. Moreover it is easy to prove formally that
$\operatorname{Cons}\vdash\operatorname{COb}\leftrightarrow \operatorname{Ob}$
and  $\vdash\operatorname{Cons} F\leftrightarrow\operatorname{Cons}(F + \operatorname{COb})$ (recall that $F$ contains Ob and~Im). If $\operatorname{Cons}$ is true
(as will be assumed) COb reduces to Ob, and if $\operatorname{Cons}$ is false, COb reduces to  $\forall\,t\ \forall\,i\ \ P(t, i)$, which contradicts Im;
consequently, in this case, $F$, as expected, is false (contradictory). Moreover, the notion of freedom vanishes if the theory is not consistent. 

Believing in $\operatorname{Cons}$, there are no reasons in physics not to accept COb if we accept Ob, which belongs to experimental physics (in the cases mentioned above: EPR, Heisenberg's, or others). But, formally, provably, we have

\begin{theorem}
\label{th2}
If $F$ is consistent, then \textup{COb} implies that there are free particles relative to the theory~$F$.
\end{theorem}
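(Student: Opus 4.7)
The strategy is to fix an arbitrary time $t$, suppose for contradiction that the property of the particle at $t$ is decided by $F$ on \emph{every} parameter $i\in\{1,\dots,n\}$, and then extract from that assumption a proof of $\operatorname{Cons} F$ \emph{inside} $F$ itself --- which is forbidden by G\"odel's second incompleteness theorem (applicable here since, as stressed at the start of this section, $F$ contains the arithmetic needed for real analysis).

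Concretely, fix $t$ and suppose that for each $i$ either $F\vdash P(t,i)$ or $F\vdash \urcorner P(t,i)$. I would split into two subcases. If $F\vdash P(t,i)$ for \emph{every} $i$, then, $i$ ranging over the finite set $\{1,\dots,n\}$, the instances can be conjoined to give $F\vdash \forall i\,P(t,i)$, which together with axiom Im contradicts the assumed consistency of~$F$.

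Otherwise there is some $i_{0}$ with $F\vdash \urcorner P(t,i_{0})$. Instantiating COb at $(t,i_{0})$ and applying modus tollens, $F$ proves $\urcorner(\operatorname{Cons}\to A(t,i_{0}))$, which classically is $\operatorname{Cons}\wedge \urcorner A(t,i_{0})$; in particular $F\vdash \operatorname{Cons} F$. By G\"odel's second incompleteness theorem this is incompatible with the consistency of $F$, again a contradiction. Hence for every $t$ there must be an $i$ with $F\nvdash P(t,i)$ and $F\nvdash \urcorner P(t,i)$, which is the sought freedom of the particle.

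The decisive step --- and the only non-routine one --- is this last move: the contrapositive of the \emph{conditional} axiom COb turns a determinacy verdict about the physical event $P(t,i_{0})$ into an internal proof of $\operatorname{Cons} F$, and it is precisely G\"odel II that forbids this. The finite conjunction in the first subcase and the classical equivalence $\urcorner(X\to Y)\equiv X\wedge \urcorner Y$ are trivial, so the entire force of Theorem~\ref{th2} rests on coupling the conditional form of COb with G\"odel's second incompleteness theorem.
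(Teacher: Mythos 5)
Your proof is correct and follows essentially the same route as the paper: rule out $F\vdash P(t,i)$ for all $i$ via Im, then use the contrapositive of COb on some $F\vdash\urcorner P(t,i_{0})$ to get $F\vdash\operatorname{Cons}F$, contradicting G\"odel's second incompleteness theorem. Your explicit two-subcase split and the remark that the finite range of $i$ justifies the conjunction merely make the paper's own argument slightly more precise.
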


\begin{proof}
Suppose the particle is not free, then (using the definition of freedom of the particle)
\medskip

$(*)$\hfill$\exists t\ \forall i\ \ (F \vdash P(t,i)
\vee F\vdash\,\urcorner\, P(t,i))$.\hfill $\hphantom{A}$
\medskip

But $F\vdash P(t, i)$ for all $i$ contradicts Im; therefore, for some $i$,
\medskip

$\hphantom{A}$\hfill$\exists t\ \ (F\vdash\,\urcorner\,P(t,i))$.\hfill$\hphantom{A}$
\medskip

\noindent
Now, because of COb, this implies
\medskip

$\hphantom{A}$\hfill$\exists t\ \exists i\ \ (F\vdash\,\urcorner\,(\operatorname{Cons}\to A(t,i))$;\hfill$\hphantom{A}$
\medskip

\noindent
hence, for some $t$ and $i$, $F\vdash\operatorname{Cons}\wedge\,\urcorner\, A(t,i)$
and $F\vdash\operatorname{Cons}F$. However, because of G\"odel's theorem,
this is impossible if $F$ is consistent.
\end{proof}

\section*{\large Commentary and Improvement}

Thus, if we assume informally (that's only what we can do) that the QM theory~$F$ is consistent and we accept COb, which in this case, informally, reduces to Ob (included in QM), then formally the determinism~($*$)  of~$F$ implies the inconsistency of~$F$. This contradiction shows that~($*$) is impossible and that there are free behaviors of particles in~QM.

Now, the question naturally arises: Which real logical or physical part in~$F$ implies contradiction? If we suppose $F$ \textit{provably} consistent, then, of course,
$\operatorname{Cons}$ is false and implies any formula. For example, for all~$F$, trivially,    $\urcorner\operatorname{Cons}F$, $F\vdash\perp$ (always false); hence
$F\vdash\,\urcorner\,\urcorner\,\operatorname{Cons}$ and, classically, $F\vdash\operatorname{Cons}$. But this is based on the assumption that, provably, $\operatorname{Cons}$ is true, which is, of course, not the case in the proof of the theorem. However, we have here, formally and \textit{not using the meaning of} $\urcorner\,\operatorname{Cons}$ or $\urcorner\operatorname{Cons}F$, $F\vdash\perp$:
\medskip

$(**)$\hfill$\urcorner\operatorname{Cons}, \operatorname{COb},
\operatorname{Im}\vdash\perp$,\hfill $\hphantom{A}$
\medskip

\noindent
because $\urcorner\operatorname{Cons}$, together with
\medskip

$\hphantom{A}$\hfill$\forall t\ \forall i\ \ ((\operatorname{Cons}\to A(t,i))\to P(t,i))$,\hfill$\hphantom{A}$
\medskip

\noindent
imply $\forall t\ \forall i\ \ P(t,i)$, which contradicts Im. And, from  (**),
therefore we obtain   $\operatorname{COb},
\operatorname{Im}\vdash\, \operatorname{Cons}$.
It is then possible to argue that the conclusion $F\vdash\operatorname{Cons}F$ in the proof results simply from COb and~Im and not genuinely from QM (in~$F$), although, in the proof, $F$ appears from the \textit{physical} assumption~($*$) and the contradiction follows
from $F\vdash\,\urcorner\, P(t, i)$.

To avoid this difficulty, axiom Im can be superseded by the following axiom:
\medskip

$\text{CIm}$\hfill$\operatorname{Cons}\to\ \text{Im}\quad   
\text{or, explicitly,} \quad 
\operatorname{Cons} \rightarrow  \forall t\ \urcorner\,\forall i\ \ P(t,i)$
\hfill $\hphantom{A}$

\bigskip
This axiom is unquestionable, because it is a consequence of Im,
and~Im is, as for Kochen--Specker's (or resp. Bell's) theorem, an \textit{elementary} geometrical---especially, in the simple proof using Peres construction
(see [1])---(or resp. elementary probabilistic) result; and if the theory is consistent, then necessarily, it requires Im, since      $\urcorner\,$Im implies $\urcorner\,\operatorname{Cons}$.

But, with CIm, we have no equivalent of~ (**)  (without using   $\urcorner\,\operatorname{Cons}\vdash\perp$), and we cannot formally deduce
$\operatorname{Cons}$ from COb and CIm. Nevertheless, if $F_0$ is $F$
without Ob and~Im (thus, 
$F = F_0 + \operatorname{Ob} + \operatorname{Im}$)
and if we set
\medskip

$\hphantom{A}$\hfill$F' = F_0 + \operatorname{COb} + \operatorname{CIm}$,
\hfill$\hphantom{A}$
\medskip

\noindent
then we have
\medskip

$(***)$\hfill$\operatorname{Cons} F\,\vdash\,\operatorname{Cons} F'$\hfill $\hphantom{A}$
\medskip

\noindent
(recall that $\operatorname{Cons}, \operatorname{Ob}\vdash\operatorname{COb}$
and that  $\operatorname{Cons} = \operatorname{Cons} F$)
\medskip

and the following:

\begin{theorem}
\label{th3}
If $F$ is consistent, then F' is consistent and there are free particles relative to the theory~$F'$.
\end{theorem}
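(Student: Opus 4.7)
My plan has two parts, matching the two conclusions of the theorem.

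For the consistency of $F'$, the assertion $(***)$ does the work immediately: under the hypothesis that $F$ is consistent, $\operatorname{Cons} F$ holds at the meta-level, so $(***)$ yields $\operatorname{Cons} F'$, i.e., $F'$ is consistent.

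For the existence of free particles in $F'$, I would mirror the proof of Theorem~\ref{th2} inside $F'$, using COb and CIm (as axioms of $F'$) in place of Ob and Im. Assume for contradiction that no particle is free, so that $\exists t\ \forall i\ (F' \vdash P(t,i) \vee F' \vdash \urcorner P(t,i))$. Fix the witnessing $t$. If some $i$ satisfies $F' \vdash \urcorner P(t,i)$, then COb forces $F' \vdash \operatorname{Cons} \wedge \urcorner A(t,i)$, hence $F' \vdash \operatorname{Cons}$; using the internalisation of $(***)$ inside $F'$, namely $F' \vdash \operatorname{Cons} F \to \operatorname{Cons} F'$, one propagates this to $F' \vdash \operatorname{Cons} F'$, and G\"odel's second incompleteness theorem applied to the (already known) consistent $F'$ yields the contradiction.

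The main obstacle is the complementary sub-case, in which $F' \vdash P(t,i)$ for every $i$. Theorem~\ref{th2} dispatched this instantly using Im as an axiom, but in $F'$ only the weaker CIm is available, and combining $F' \vdash \forall i\ P(t,i)$ with CIm gives merely $F' \vdash \urcorner \operatorname{Cons}$, which by itself is not incompatible with G\"odel. My plan here is to reuse the derivation $F + \operatorname{Cons} \vdash F'$ which underlies $(***)$: it transfers $F' \vdash \forall i\ P(t,i)$ back to $F + \operatorname{Cons} \vdash \forall i\ P(t,i)$, which contradicts the axiom Im of $F$ directly. Hence $F + \operatorname{Cons}$ is inconsistent, equivalently $F \vdash \urcorner \operatorname{Cons} F$; together with the meta-level truth of $\operatorname{Cons} F$ granted by the theorem's hypothesis (and the standard $\Pi_1$-soundness of the ambient arithmetic), this is the contradiction sought. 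The technical subtlety throughout is to keep the several readings of ``$\operatorname{Cons}$'' aligned with the theory to which G\"odel's theorem is being applied.
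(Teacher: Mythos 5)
Your overall architecture is the paper's: consistency of $F'$ is read off from $(***)$, and in the branch where $F'\vdash\urcorner P(t,i)$ for some $i$ you argue exactly as the paper does (COb gives $F'\vdash\operatorname{Cons}$, the internalised $(***)$ gives $F'\vdash\operatorname{Cons}F'$, and G\"odel's second theorem applied to the already-established consistent $F'$ closes the case). The divergence is the branch $F'\vdash P(t,i)$ for all $i$. The paper derives $F'\vdash\urcorner\operatorname{Cons}$ from CIm and dismisses it with the unexplained parenthetical that ``$\operatorname{Cons}\vdash\urcorner\operatorname{Cons}F'$ would follow'', contradicting $(***)$; you instead push provability back along $F+\operatorname{Cons}\vdash\operatorname{COb},\operatorname{CIm}$ to get $F+\operatorname{Cons}\vdash\forall i\ P(t,i)$, contradict Im inside $F$, and conclude $F\vdash\urcorner\operatorname{Cons}F$. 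These are essentially the same move (under $\operatorname{Cons}$, $F'$ is a subtheory of $F+\operatorname{Cons}$), and both terminate at the same obstacle, which you, unlike the paper, state openly: the bare hypothesis that $F$ is consistent does not exclude $F\vdash\urcorner\operatorname{Cons}F$ --- a consistent theory may prove its own inconsistency statement --- so this branch cannot be closed by consistency alone, and no appeal to G\"odel's theorem repairs it.

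Two corrections to how you close that branch. What is needed is $\Sigma_1$-soundness (equivalently $1$-consistency) of $F$ itself: $\urcorner\operatorname{Cons}F$ is a $\Sigma_1$ sentence, and you must know that $F$ proves no false $\Sigma_1$ sentence; it is not ``$\Pi_1$-soundness of the ambient arithmetic'' that does the work. Moreover the same issue already sits inside $(***)$, which you take for granted: the derivation behind it shows $\operatorname{Cons}(F+\operatorname{Cons}F)\to\operatorname{Cons}F'$, and $\operatorname{Cons}F$ does not in general yield $\operatorname{Cons}(F+\operatorname{Cons}F)$ (again, unless $F\nvdash\urcorner\operatorname{Cons}F$). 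So your proposal is, in substance, the paper's proof with its hidden gap made visible and localised; as a proof of the theorem literally as stated (hypothesis: consistency only) it does not close, but the fault lies in the statement and in the paper's own argument as much as in your reconstruction, and your explicit flagging of the extra soundness assumption is the honest way to present it.
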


\begin{proof}
It is essentially the same as for Theorem 2. Suppose we have~($*$)
with $F$ =\nobreak $F'$; then, since $F'\vdash P(t, i)$    for all $i$ and CIm or
$\operatorname{Cons}\to\forall\,t\,\urcorner\, \forall\,i\ \ P(t, i)$
would imply $F'\vdash\,\urcorner\,\operatorname{Cons}$,
while, because of  (***),  this is impossible
(since $\operatorname{Cons}\vdash\,\urcorner\,\operatorname{Cons}F'$ would follow),
we must have, as above,  $F'\vdash\,\urcorner\, P(t, i)$ for some $t$ and $i$,
which, with COb, implies $F'\vdash\operatorname{Cons}F$, and hence
$F'\vdash\operatorname{Cons}F'$ with the same conclusion that this is impossible,
since $F$ and (therefore) $F'$ are assumed consistent.
\end{proof}

It should be remarked that
the hypothesis on the observer's freedom is not necessary in the proof;
apart from the informal assumption of the consistency of~QM, no other hypothesis is needed. The result can be formulated in the following way.

\begin{theorem}
\label{th4}
Assuming the theory of Quantum Mechanics to be consistent, there are necessarily physical events that are undetermined in the theory. The incompleteness of Mathematics relative to consistency implies the non-determinism of Physics relative to physical events.
\end{theorem}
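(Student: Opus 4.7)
The plan is to derive Theorem \ref{th4} as a direct corollary of Theorem \ref{th3}, the work being almost entirely interpretive rather than deductive. First I would take $F$ to be the formal theory encoding Quantum Mechanics as described in the logical setting (including Ob, Im, and the mathematical background of real analysis), and assume informally $\operatorname{Cons} F$, as is the only option available for a theory of this strength. Theorem \ref{th3} then yields, for the weakened theory $F' = F_0 + \operatorname{COb} + \operatorname{CIm}$, the existence of some $t$ and $i$ with $F' \nvdash P(t,i)$ and $F' \nvdash \urcorner\, P(t,i)$. Since $P(t,i)$ is by its intended reading the statement that \emph{at time $t$ the particle has the observed property on parameter $i$}, such a $P(t,i)$ is a physical assertion which $F'$ neither affirms nor refutes: i.e., a physical event undetermined by the theory. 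This yields the first sentence of Theorem \ref{th4}.

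For the second sentence I would trace where G\"odel's second incompleteness theorem actually enters the proof of Theorems \ref{th2} and \ref{th3}. The determinism hypothesis $(*)$ was pushed, via COb, to the conclusion $F' \vdash \operatorname{Cons} F$, and then via $(***)$ to $F' \vdash \operatorname{Cons} F'$; the contradiction arose solely from the impossibility, under $\operatorname{Cons} F'$, of $F'$ proving its own consistency. I would emphasise that this is precisely ``incompleteness of Mathematics relative to consistency'', since $F'$ still contains the analytic/arithmetic core to which G\"odel's theorem applies. Non-determinism of physical events is therefore not an additional physical postulate but the contrapositive of G\"odel's theorem combined with COb and CIm.

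The main obstacle, such as it is, is not mathematical but expository: I need to justify the passage from the formal independence of $P(t,i)$ in $F'$ to the informal claim ``there are undetermined physical events,'' and to justify working with $F'$ rather than $F$ itself. For the former I would appeal to the intended physical reading of $P(t,i)$ as fixed by the observational practice of QM (EPR-type measurements, Heisenberg pairs), so that independence of $P(t,i)$ in the theory means absence of a predictive deduction in QM. For the latter I would note that $F'$ has been designed precisely so that the conclusion rests on the two \emph{unquestionable} ingredients COb and CIm -- a conditionalised observation axiom and the Kochen--Specker/Bell impossibility read conditionally on consistency -- so that the argument cannot be accused of smuggling in indeterminism through Im itself.
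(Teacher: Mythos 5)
Your proposal matches the paper's treatment: Theorem~\ref{th4} is not given an independent proof there but is explicitly offered as a reformulation of what Theorems~\ref{th2} and~\ref{th3} establish, namely that consistency of $F$ forces, via COb, CIm and G\"odel's second incompleteness theorem, the existence of some $P(t,i)$ undecided by $F'$, which is then read informally as an undetermined physical event. Your derivation as a corollary of Theorem~\ref{th3}, together with the interpretive remarks about why $F'$ and the physical reading of $P(t,i)$ are legitimate, is essentially the same route the paper takes.
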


This shows how deeply mathematical truth and physical truth are interlinked.

\section*{\large \bf References}

[1] J.~Conway and S.~Kochen, The Strong Free-Will Theorem, February 2009,
AMS, Providence, RI, vol. 56/2, pp. 226--232.

\noindent
[2] I.~Reznikoff, A Logical Proof of the Free-Will Theorem, \texttt{arXiv:1008.3661v1 [quant-ph] 21 Aug 2010 (http://arxiv.org/abs/1008.3661)}.

\noindent
[3] J.~S.~Bell, On the Einstein--Podolsky--Rosen Paradox,
Physics {\bf 1}, 195--200 (1964).


\end{document}